\let\csname equation*\endcsname\relax
\let\csname endequation*\endcsname\relax
\newtheorem{theorem}{Theorem}
\newtheorem{lemma}[theorem]{Lemma}
\newtheorem{proposition}[theorem]{Proposition}
\begin{document}

\title{Provable Optimality of the Square-Tooth Atomic Frequency Comb Quantum Memory}

\author{Allen Zang}
\address{Pritzker School of Molecular Engineering, University of Chicago, Chicago, IL, USA}
\ead{yzang@uchicago.edu}

\author{Martin Suchara}
\address{Microsoft Azure Quantum, Microsoft Corporation, Redmond, WA, USA}

\author{Tian Zhong}
\address{Pritzker School of Molecular Engineering, University of Chicago, Chicago, IL, USA}

\date{\today}

\begin{abstract}
    Atomic frequency comb (AFC) quantum memories are a promising technology for quantum repeater networks because they enable multi-mode, long-time, and high-fidelity storage of photons with on-demand retrieval. The optimization of the retrieval efficiency of an AFC memory is important because it strongly impacts the entanglement distribution rate in quantum networks. Despite initial theoretical analyses and recent experimental demonstrations, a rigorous proof of the universally optimal configuration for the highest AFC retrieval efficiency has not been presented. In this paper we present a simple analytical proof which shows that the optimized square tooth offers the highest retrieval efficiency among all tooth shapes, under the physical constraint of finite optical depth of an atomic ensemble. The optimality still holds when the non-zero background absorption and the finite optical linewidth of atoms are considered. We further compare square, Lorentzian and Gaussian tooth shapes to reinforce the practical advantage of the square-tooth AFC in retrieval efficiency. Our proof lays rigorous foundation for the recipe of creating optimal AFC under realistic experimental conditions. 
\end{abstract}

\maketitle

%---------------------------------------------------------------------------------
\section{Introduction}
%---------------------------------------------------------------------------------
Quantum memories capable of storing quantum states for long periods of time are essential components of quantum communication networks~\cite{kimble2008quantum,wehner2018quantum}. Quantum networks are receiving significant attention in the science and engineering community because they are expected to enable new important applications such as distributed quantum computing~\cite{gottesman1999demonstrating,cuomo2020towards}, distributed sensing~\cite{proctor2018multiparameter,zhang2021distributed}, and secure key distribution~\cite{bennett2014quantum,ekert1991quantum}. 

Atomic frequency comb (AFC)~\cite{de2008solid,afzelius2009multimode} is a promising optical quantum memory protocol~\cite{lvovsky2009optical,lei2023quantum} which allows the absorption and retrieval of photons transmitted within quantum networks~\cite{awschalom2021development,awschalom2022roadmap}, and enables entanglement distribution between remote network nodes~\cite{briegel1998quantum,guha2015rate,munro2015inside,muralidharan2016optimal,pant2019routing,dahlberg2019link,shi2020concurrent,khatri2021policies,goodenough2021optimizing,kolar2022adaptive,zang2023entanglement,azuma2023quantum,zang2024analytical} when combined with common single- or entangled-photon sources~\cite{sangouard2011quantum}. Incoming optical signals to AFC memories are stored in a delocalized form within the atomic ensemble. The retrieval of stored photons in AFC memories is achieved due to the comb-like density function of atomic transition frequency, or in classical terms, the comb-like absorption profile in the frequency domain. This results in the rephasing of all emitters' optical transition dipoles after a fixed storage time determined by the comb period, leading to re-emission of the stored photons. Furthermore, the retrieval from the AFC memory can be achieved on-demand by introducing control pulses that convert atomic ensemble excitation between the optical transition and the long-lived spin-wave~\cite{afzelius2010demonstration,lauritzen2011approaches}. 

AFC memories have one significant advantage over many other absorptive quantum memories such as electromagnetically induced transparency (EIT)~\cite{fleischhauer2000dark,phillips2001storage} and Raman storage~\cite{michelberger2015interfacing,saunders2016cavity}. The temporal multimodality of AFC memories~\cite{ortu2022multimode} is in principle independent of optical depth~\cite{afzelius2009multimode}, but instead is determined by the storage time and the duration of the signal to store, and more specifically the ratio of inhomogeneous broadening to homogeneous broadening. We note that the temporal multimodality of other photon-echo quantum memory protocols~\cite{chaneliere2018quantum,moiseev2024echo} are also less limited by optical depth, such as controlled inhomogeneous broadening (CRIB)~\cite{moiseev2001complete,moiseev2004possibilities,alexander2006photon,sangouard2007analysis}, gradient echo memory (GEM)~\cite{alexander2006photon}, and revival of silenced echo signal (ROSE)~\cite{damon2011revival}, while there is still evidence that AFC has a better support for multimodality under identical optical depth~\cite{nunn2008multimode}. Multi-mode quantum memories are important for quantum network architectures due to the requirement of  multiplexing~\cite{collins2007multiplexed,simon2007quantum,sangouard2011quantum}. Moreover, the multimodality of AFC memories goes beyond temporal modes, and extends to spatial and spectral degrees of freedom. These modes can be combined, as has been reported in experiments~\cite{gundougan2012quantum,tang2015storage,jobez2016towards,yang2018multiplexed,seri2019quantum,businger2022non,wei2024quantum}. Notably, the distribution of entanglement between AFC quantum memories has been experimentally demonstrated~\cite{lago2021telecom,liu2021heralded}. Furthermore, one-hour coherent storage of optical signal via a spin-wave AFC memory has been realized with dynamical decoupling for noise-mitigation~\cite{ma2021one}. Due to the promising potential of their integration into quantum networks, models of AFC memories have been included in quantum network simulators that model the quantum network physical layer, such as NetSquid~\cite{coopmans2021netsquid} and SeQUeNCe~\cite{wu2021sequence,zang2022simulation}. 

The retrieval efficiency of AFC memories is a potential limiting factor of the entanglement generation rate for distributed quantum information processing. For AFC memories without cavity enhancement as considered in this work, the upper bound of forward retrieval efficiency is 54\% and the upper bound of backward retrieval efficiency is 100\%~\cite{afzelius2009multimode}. However, such theoretical upper bounds do not explicitly instruct experimentalists how to achieve the best possible performance under realistic constraints. The AFC tooth shape optimization was first reported more than a decade ago~\cite{bonarota2010efficiency,bensky2012optimizing}, where the results imply that AFC memories with square teeth could achieve the highest retrieval efficiency. However, to the best of our knowledge, no rigorous proof of the optimality for square-tooth AFC has been presented since. 

In this work we use the semi-classical theoretical framework from~\cite{bonarota2010efficiency} to rigorously prove that AFCs with square teeth provide the highest achievable retrieval efficiency when the height of comb teeth is upper bounded, i.e., when the optical depth is finite. We show that this optimality still holds when the spectral profile has non-zero background absorption and the intrinsic line shape of an atom has finite width. The paper is organized as follows. Our proof covers all integrable functions which arguably include all physically relevant tooth shapes. Therefore, this work reinforces physical intuition with mathematical rigor by eliminating the existence of any potential corner cases. Moreover, the proof itself could inspire other studies of optimality. In Section~\ref{sec:review} we briefly review the semi-classical theory describing the physical models of AFC memories and the analytical expression of forward retrieval efficiency, as shown in~\cite{bonarota2010efficiency}. We then prove the optimality of the retrieval efficiency for AFCs with square teeth in Section~\ref{sec:proof}, both in the ideal case and when considering non-zero background absorption and optical linewidth. After the proofs, in Section~\ref{sec:comparison} we compare the square tooth with two other typical lineshapes, namely Lorentzian and Gaussian, to demonstrate the robust advantage of square tooth. Section~\ref{sec:conclusion} concludes the paper.

%---------------------------------------------------------------------------------
\section{Semi-classical theory of the atomic frequency comb}\label{sec:review}
%---------------------------------------------------------------------------------
The periodic comb structure in the distribution of the atomic transition frequency for AFC memories allows us to interpret the absorption and retrieval processes of optical AFC memories as diffraction of a spectral grating~\cite{sonajalg1994diffraction,chaneliere2010efficient,bonarota2010efficiency} under a semi-classical theoretical framework. Therefore, the analytical expression of AFC retrieval efficiency can be derived from semi-classical Maxwell-Bloch equations~\cite{arecchi1965theory,icsevgi1969propagation}, which describe the absorption and retrieval processes of AFC by incorporating the coupled dynamics of both the propagating electromagnetic field and the two-level system ensemble in the medium. 

The first-order Maxwell-Bloch equations are derived under the slowly varying envelope approximation and the rotating wave approximation~\cite{allen1987optical}, and in the weak input signal limit~\cite{crisp1970propagation}. The two equations are:
\begin{equation}\label{eqn:MB_eqn}
\begin{aligned}
    &\partial_z \Omega(z,t) + \frac{1}{c}\partial_t\Omega(z,t) = -\frac{i}{2\pi}\int d\omega f(\omega)P(\omega;z,t),\\
    &\partial_t P(\omega;z,t) = -(i\omega + \gamma)P(\omega;z,t) - i\Omega(z,t),
\end{aligned}
\end{equation}
where $\Omega(z,t)$ is the Rabi frequency proportional to the propagating field in the medium, $P(\omega;z,t)$ is the atomic polarization for two-level systems with frequency detuning $\omega$, $f(\omega)$ is the frequency-dependent absorption coefficient which represents the effect of inhomogeneous broadening~\cite{sangouard2007analysis}, and the effect of homogeneous broadening is accounted phenomenologically by the parameter $\gamma$~\cite{icsevgi1969propagation}. We further assume a periodic absorption coefficient 
\begin{align}
    f(\omega) = \sum_{n\leq 0}F_n e^{-in\omega T}
\end{align}
to account for the comb structure, where $2\pi/T$ is the AFC comb period, and the requirement of $n\leq 0$ is for a causality reason~\cite{chaneliere2010efficient}. The periodic structure also implies successive retrieval ``echoes'' centered at times $t=pT,\ p=0,1,2,\dots$, i.e. 
\begin{align}
    \Omega(z,t)=\sum_{p\geq 0}a_p(z)\Omega(0,t-pT),
\end{align}
where the requirement on $p\geq 0$ comes again from causality. 

The retrieval efficiency is defined as $\eta(L)\coloneqq|a_1(L)|^2$ corresponding to the first retrieved pulse at the output end of the sample with length $L$. While the detailed solution of Eqn.~\ref{eqn:MB_eqn} is presented in Appendix~\ref{sec:solution}, we note that from the solution we have the analytical expression for the AFC retrieval efficiency:
\begin{equation}\label{eqn:eff}
    \eta(L) = |F_{-1}L|^2e^{-F_0L},
\end{equation}
where $F_0$ and $F_{-1}$ are the 0-th and $(-1)$-th Fourier coefficients, respectively, for a periodic function $f(\omega)$ which defines the shape of the periodic comb. Explicitly they are calculated as:
\begin{align}
    & F_0 = \frac{T}{2\pi}\int_{-\pi/T}^{\pi/T}f(\omega)d\omega, \\
    & F_{-1} = \frac{T}{2\pi}\int_{-\pi/T}^{\pi/T}f(\omega)e^{i\omega T}d\omega.
\end{align}

%---------------------------------------------------------------------------------
\section{Proof of optimality of the square-tooth atomic frequency comb}\label{sec:proof}
%---------------------------------------------------------------------------------
For the rest of the paper we will focus on the mathematical proofs, before which we would like to emphasize the connection between the abstract mathematical formulation and the physical scenario. We consider the scenario where experimentalists perform optical pumping on a specific atomic ensemble to create AFC with a fixed periodicity in frequency domain $\delta=\pi/T$ (and the full period is $2\delta$), which is determined by the retrieval time $T$ required by the quantum information process task to achieve. Besides the fixed period, the created AFC tooth shape will also be subject to constraints which are determined by the physical properties of the atomic ensemble itself: maximum absorption $\alpha_M$, background absorption $\alpha_\mathrm{bg}$, and optical linewidth from finite optical coherence time $T_2$ together with spectral diffusion and broadening due to the hole-burning laser. Under the assumption of periodic comb, we can focus on a single period $[-\pi/T,\pi/T]$ (i.e. $[-\delta,\delta]$), in which the shape of comb tooth is described by a real-valued function $f(\omega)$ s.t. $\alpha_\mathrm{bg}\leq f(\omega)\leq \alpha_M$.

In the following, we first ignore background absorption, i.e. assuming $\alpha_\mathrm{bg}=0$, and prove the optimality of square tooth among all symmetric tooth shapes s.t. $f(\omega)=f(-\omega)$, which will be the basis of proving the optimality of square tooth among arbitrary tooth shapes after coordinate redefinition. Then, we show that the inclusion of non-zero background absorption does not affect the optimality of square tooth above the background. Finally, we consider finite optical linewidth which will make it impossible to create arbitrary comb tooth shapes. We demonstrate that taking the optimized square-tooth as the \textit{target} tooth shape will give the best efficiency among all realizable tooth shapes, under the assumption that the actual tooth shape is the convolution of the target tooth shape and the single-atom line shape~\cite{bonarota2010efficiency}.

\subsection{Proof of optimality among all symmetric tooth shapes}\label{sec:symmetric}
We first consider symmetric tooth shapes s.t. $f(\omega)=f(-\omega)$, because they allow us to ignore the imaginary part of $F_{-1}$ which naturally vanishes due to the symmetry of $f(\omega)$, and we can also restrict ourselves to positive $\omega$. Notice that tooth shapes subject to the maximal absorption constraint $\alpha_M$ can have varying area $F_0\in[0,\alpha_M]$. Therefore, the proof of the optimality of the \textit{optimized} square tooth is achieved in two steps. We first prove that for all symmetric combs with identical area (thus identical $F_0$), the square one is the best. Then we can optimize the width for all allowed square teeth, and the square tooth with optimized width is then obviously the best among all possible tooth shapes even with different areas.

\begin{proposition}\label{thm:sym_comb}
Among all symmetric tooth shapes defined on $[-\pi/T,\pi/T]$ with fixed area and subject to $0\leq f(\omega)\leq\alpha_M$, the square tooth provides the highest retrieval efficiency.
\end{proposition}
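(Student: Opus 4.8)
The plan is to reduce the efficiency maximization to the maximization of a single linear functional and then dispatch it with a bathtub-principle (rearrangement) argument. Since the area is held fixed, $\alpha_0$ is fixed, so the factor $e^{-\alpha_0 L}$ in Eqn.~\ref{eqn:eff} is a constant and maximizing $\eta(L)$ is equivalent to maximizing $|\alpha_{-1}|$. For a comb symmetric about $x=0$ the odd part of the integrand integrates to zero, so $\alpha_{-1}$ is real and
\begin{equation*}
\alpha_{-1}=\frac{T}{2\pi}\int_{-\pi/T}^{\pi/T} f(x)\cos(Tx)\,dx=\frac{T}{\pi}\int_{0}^{\pi/T} f(x)\cos(Tx)\,dx .
\end{equation*}
The problem thus becomes: over all $f$ with $0\le f(x)\le\alpha_M$ and $\int_0^{\pi/T} f(x)\,dx=\alpha_M\Gamma$, maximize the linear functional $I[f]\coloneqq\int_0^{\pi/T} f(x)\cos(Tx)\,dx$.

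The key structural observation I would use is that $\cos(Tx)$ is strictly decreasing on $[0,\pi/T]$. Weighting $f$ against a decreasing function, subject to a pointwise box bound and a fixed total mass, is maximized by concentrating all of the available mass where the weight is largest, i.e.\ by saturating $f=\alpha_M$ on the leftmost interval $[0,\Gamma]$ and setting $f=0$ on $[\Gamma,\pi/T]$. This candidate is exactly the square comb $f_{\mathrm{sq}}$, so the content of the proposition is that this intuitive ``pile the mass at the left'' prescription is genuinely optimal.

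To make this rigorous I would argue by direct comparison rather than by invoking the general bathtub theorem. Setting $g\coloneqq f_{\mathrm{sq}}-f$, the box constraint forces $g\ge0$ on $[0,\Gamma]$ (there $f_{\mathrm{sq}}=\alpha_M\ge f$) and $g\le0$ on $[\Gamma,\pi/T]$ (there $f_{\mathrm{sq}}=0\le f$), while the equal-area constraint gives $\int_0^{\pi/T} g\,dx=0$. Subtracting the threshold constant $\cos(T\Gamma)$, which is legitimate precisely because $\int g=0$,
\begin{equation*}
I[f_{\mathrm{sq}}]-I[f]=\int_0^{\pi/T} g(x)\big[\cos(Tx)-\cos(T\Gamma)\big]\,dx .
\end{equation*}
Because $\cos(Tx)$ is decreasing, the bracket is $\ge0$ on $[0,\Gamma]$ and $\le0$ on $[\Gamma,\pi/T]$, matching the sign of $g$ on each subinterval; the integrand is therefore pointwise nonnegative and $I[f_{\mathrm{sq}}]\ge I[f]$.

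The one genuine subtlety --- which I expect to be the only real obstacle --- is that $\eta$ depends on $|\alpha_{-1}|$, not on the signed functional $I$. I would handle it by noting that $I[f_{\mathrm{sq}}]=\alpha_M\sin(T\Gamma)/T>0$, so the maximizer of $I$ already has positive $\alpha_{-1}$, and that the reflection $x\mapsto \pi/T-x$ sends $\cos(Tx)\mapsto-\cos(Tx)$ while preserving both constraints, hence maps $I$ to $-I$ bijectively on the admissible set. The minimum of $I$ therefore equals $-\max I$ with equal magnitude, so $f_{\mathrm{sq}}$ maximizes $|\alpha_{-1}|$ as well. Everything else --- the symmetry reduction and the elementary evaluation of $I[f_{\mathrm{sq}}]$ --- is routine.
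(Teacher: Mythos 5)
Your proof is correct, and its core inequality is the same one the paper uses: compare $f$ against the square comb through the zero-mean difference function, subtract the threshold constant $\cos(T\Gamma)$, and exploit the sign pattern of the difference against the monotonicity of cosine. Two differences are worth recording. First, the paper splits the argument into two scenarios, $\Gamma\leq\pi/2T$ and $\Gamma>\pi/2T$, and proceeds by contradiction; your pointwise observation --- that $g(x)$ and $\cos(Tx)-\cos(T\Gamma)$ have matching signs on each subinterval \emph{regardless} of the sign of $\cos(T\Gamma)$ --- collapses both cases into a single direct estimate, which is the cleaner way to write the same bound. Second, and more substantively, you close a gap the paper's proof leaves open. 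The efficiency depends on $|\alpha_{-1}|$, not on the signed integral, so ``$f$ beats the square comb'' means $|\alpha_{-1}[f]|>\alpha_{-1}[f_s]$, which is also consistent with $\alpha_{-1}[f]$ being large and negative; the paper's contradiction hypothesis $\int_0^{\pi/T}f(x)\cos(Tx)\,dx\geq\int_0^{\Gamma}\alpha_M\cos(Tx)\,dx$ only covers the positive branch, and its argument establishes $\alpha_{-1}[f]\leq\alpha_{-1}[f_s]$ without ruling out the negative one. Your reflection argument ($x\mapsto\pi/T-x$ preserves the box and area constraints and sends $I\mapsto-I$ bijectively, so $\min I=-\max I$) disposes of that branch explicitly, making your version the more complete proof of the stated proposition.
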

Here we explain the sketch of the proof, while the details can be found in Appendix~\ref{sec:proof_detail}. The square tooth shape $f_s(\omega)$ takes value $\alpha_M$, i.e. the maximal absorption, on $[-\Gamma,\Gamma]$, where $\Gamma$ is the half-width of square tooth, and is zero on $[-\pi/T,-\Gamma]\cup[\Gamma,\pi/T]$. We then consider an arbitrary symmetric shape $f(\omega)$ with the same area as the square tooth, which means that $f(\omega)$ satisfies:
\begin{align}
    \int_0^{\pi/T}f(\omega)d\omega = \int_0^{\Gamma}\alpha_Md\omega = \alpha_M\Gamma.
\end{align}
To prove the proposition, we would like to show that $f(\omega)$ cannot achieve higher retrieval efficiency than the square shape $f_s(\omega)$, i.e.
\begin{align}
    \left\vert\int_0^{\pi/T}f(\omega)\cos(\omega T)d\omega\right\vert \leq \int_0^{\Gamma}\alpha_M\cos(\omega T)d\omega,
\end{align}
which is then proved based on bounding of the integrands on the interval $[0,\pi/T]$. The proof is divided into two scenarios: (i) $\Gamma\leq\pi/2T$, i.e. when the finesse of the comb is greater than 2, and (ii) $\Gamma>\pi/2T$, i.e. when the finesse of the comb is smaller than 2.

This proposition means that we cannot find another symmetric tooth shape $f(\omega)$ that can achieve a higher retrieval efficiency than the square shape when they have the same area. In other words, for any possible symmetric tooth shape $f(\omega)$, there at least exists one square tooth which can achieve an efficiency that is not lower than $f(\omega)$, and this square tooth has the same area as $f(\omega)$. As a result, the symmetric tooth shape which can achieve the highest efficiency must be a square tooth, and then we can straightforwardly optimize the width of square tooth to obtain the optimal square tooth~\cite{bonarota2010efficiency}, which offers the highest efficiency among all possible symmetric combs subject to the constraint $0\leq f(x)\leq \alpha_M$.

Mathematically, the above proposition means that for any bounded function $0\leq f(\omega)\leq\alpha_M$ defined on $[0,\pi/T]$ with area $\int_0^{\pi/T}f(\omega)d\omega = \alpha_M\Gamma$, we always have $\left\vert\int_0^{\pi/T}f(\omega)\cos(\omega T)d\omega\right\vert \leq \int_0^{\Gamma}\alpha_M\cos(\omega T)d\omega$. This will help the proof of optimality among arbitrary shapes in the next section.

\subsection{Proof of optimality among all tooth shapes}
For an arbitrary tooth shape function, the Fourier coefficient $F_{-1}$ will in general contain imaginary part. According to Eqn.~\ref{eqn:eff} we are interested in the modulus of $F_{-1}$. 

Geometrically, for $\omega\in[-\pi/T,\pi/T]$ the integral kernel of $F_{-1}$, $e^{-i\omega T}$, represents a vector with unit length pointing towards the direction determined by the phase $\omega T$ with respect to the polar axis. The integral can then be understood as an addition of vectors pointing to the polar angle $\omega T$ with length $f(\omega)$, under the standard Riemann integral interpretation. It is certain that the integral will also result in a vector pointing to certain polar angle $\omega_0T$. Since we are only interested in the length of this vector to evaluate the AFC efficiency, the exact phase of $F_{-1}$ does not matter. We can redefine the origin of the polar angle as $\omega=\omega_0$. 

In this way, $|F_{-1}|$ equals the integral of the vector component that is parallel to the new zero-angle orientation, as now we are sure that the integral of the orthogonal component will vanish:
\begin{align}
    |F_{-1}| =& \frac{T}{2\pi}\left\vert\int_{-\pi/T}^{\pi/T}f(x)\cos(\omega T-\omega_0T)d\omega\right\vert \nonumber \\
    =& \frac{T}{2\pi}\left\vert\int_{-\pi/T-\omega_0}^{\pi/T-\omega_0}f(\omega'+\omega_0)\cos(\omega'T)d\omega'\right\vert\nonumber \\
    =& \frac{T}{2\pi}\left\vert\int_{-\pi/T}^{\pi/T}f'(\omega')\cos(\omega'T)d\omega'\right\vert,\label{eqn:modulus}
\end{align}
where we have defined a new shape function $f'(\omega')$ using translation of the coordinate $\omega'=\omega-\omega_0$:
\begin{equation}
    f'(\omega') = 
    \begin{cases}
    f(\omega'+\omega_0) & \omega'\in[-\frac{\pi}{T}, \frac{\pi}{T}-\omega_0]\\
    f(\omega'+\omega_0-\frac{2\pi}{T}) & \omega'\in[\frac{\pi}{T}-\omega_0, \frac{\pi}{T}]
    \end{cases}.
\end{equation}
Because of the assumption that the comb is periodic, for a single tooth the definition of $f'(\omega')$ is equivalent to left translating the original tooth shape $f(\omega)$ by $\omega_0$ under a periodic boundary at $\omega=\pm\pi/T$. Therefore, the new function $f'(\omega')$ still satisfies the constraint which $f(\omega)$ satisfies, i.e. $0\leq f'(\omega')\leq\alpha_M$, and its integral on $[-\pi/T,\pi/T]$ is identical to the original shape function, i.e. $\int_{-\pi/T}^{\pi/T}f(\omega)d\omega=\int_{-\pi/T}^{\pi/T}f'(\omega')d\omega'$. 

We emphasize that the coordinate translation $\omega_0$ is determined by the condition $\int_{-\pi/T}^{\omega_0}f(\omega)\sin[(\omega-\omega_0)T]d\omega = -\int_{\omega_0}^{\pi/T}f(\omega)\sin[(\omega-\omega_0)T]d\omega$, but this does not guarantee that the area of the original shape function $f(\omega)$ has equal areas on both sides of $\omega_0$, i.e. in general $\int_{-\pi/T}^{\omega_0}f(\omega)dx \neq \int_{\omega_0}^{\pi/T}f(\omega)d\omega$, or in terms of the newly defined function $\int_{-\pi/T}^0f'(\omega')d\omega' \neq \int_0^{\pi/T}f'(\omega')d\omega'$. Therefore, before we prove that the square tooth shape is optimal among arbitrary shapes with an identical area, we need the following lemma.
\begin{lemma}\label{thm:center_square}
Among all square shape functions $f_{(\Gamma,c)}(\omega)$ on $[-\pi/T,\pi/T]$ with fixed half width $0\leq\Gamma\leq\pi/T$ centered at $-\pi/T+\Gamma\leq c\leq\pi/T-\Gamma$ that take the maximal possible value $\alpha_M$ for $\omega\in[c-\Gamma,c+\Gamma]$ and zero otherwise, the one centered at $c=0$ will provide the highest 
\begin{equation}
    I(\Gamma,c)\coloneqq \int_{-\pi/T}^{\pi/T}f_{(\Gamma,c)}(\omega)\cos(\omega T)d\omega.
\end{equation}
\end{lemma}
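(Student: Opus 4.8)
The plan is to evaluate $I(\Gamma,c)$ in closed form and then optimize over the center $c$ with the half width $\Gamma$ held fixed. Because $f_{(\Gamma,c)}$ equals $\alpha_M$ on $[c-\Gamma,c+\Gamma]$ and vanishes elsewhere, the first step is the elementary integral
\begin{equation}
    I(\Gamma,c) = \alpha_M\int_{c-\Gamma}^{c+\Gamma}\cos(Tx)\,dx = \frac{\alpha_M}{T}\bigl[\sin(T(c+\Gamma)) - \sin(T(c-\Gamma))\bigr].
\end{equation}
I would then apply the sum-to-product identity $\sin A - \sin B = 2\cos\frac{A+B}{2}\sin\frac{A-B}{2}$, which factors the result into a product that cleanly separates the dependence on $c$ from that on $\Gamma$:
\begin{equation}
    I(\Gamma,c) = \frac{2\alpha_M}{T}\cos(Tc)\sin(T\Gamma).
\end{equation}

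Next I would observe that, for fixed $\Gamma$, the prefactor $\frac{2\alpha_M}{T}\sin(T\Gamma)$ is a nonnegative constant: the bound $0\leq\Gamma\leq\pi/T$ forces $T\Gamma\in[0,\pi]$, on which $\sin(T\Gamma)\geq 0$. Maximizing $I(\Gamma,c)$ over $c$ therefore reduces to maximizing $\cos(Tc)$ alone. The final step is to examine the admissible range of $c$: the constraint $-\pi/T+\Gamma\leq c\leq\pi/T-\Gamma$ translates to $Tc\in[-\pi+T\Gamma,\,\pi-T\Gamma]$, an interval symmetric about zero and contained in $[-\pi,\pi]$. On such an interval $\cos(Tc)$ attains its maximum at the origin, so the optimal choice is $c=0$, which is always feasible. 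This yields $I(\Gamma,c)\leq I(\Gamma,0)$ for every admissible $c$, as claimed.

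I do not expect any genuine obstacle here, since the entire argument rests on a single exact integration. The only subtlety worth flagging is ensuring that the sign of the $\Gamma$-dependent prefactor, and hence the direction of the optimization, does not reverse: this is precisely what the hypothesis $\Gamma\leq\pi/T$ guarantees, and it is also what keeps $c=0$ in the interior of the feasible window rather than on its boundary.
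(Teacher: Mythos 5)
Your proposal is correct and follows essentially the same route as the paper: both evaluate $I(\Gamma,c)$ in closed form as $\frac{2\alpha_M}{T}\sin(\Gamma T)\cos(cT)$, note that $\sin(\Gamma T)\geq 0$ because $\Gamma\leq\pi/T$, and conclude that the maximum over the admissible centers occurs at $c=0$. The only cosmetic difference is that the paper argues via the sign of $\partial I/\partial c$ while you invoke the maximality of $\cos(Tc)$ at the origin directly; the substance is identical.
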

The proof is straightforward, and can be found in Appendix~\ref{sec:proof_detail}. Using this lemma we can obtain the optimality of the square tooth shape among arbitrary shapes with a fixed area.

\begin{proposition}\label{thm:all_comb}
Among all tooth shapes defined on $[-\pi/T,\pi/T]$ with a fixed area and subject to $0\leq f(\omega)\leq\alpha_M$, the square tooth provides the highest retrieval efficiency.
\end{proposition}
The proof is based on the aforementioned coordinate redefintion, from which we can express $|F_{-1}|$ as the absolute value of an integral in Eqn.~\ref{eqn:modulus}. The form of integral is almost the same as the one we encounter in the proof of optimality among all symmetric shapes, and we can upper bound the absolute value by dividing the integral into two parts, on $[-\pi/T,0]$ and $[0,\pi/T]$, respectively. Therefore, we are able to use the results from the last section to prove the upper bound of $|F_{-1}|$ for arbitrary tooth shape. The details can be found in Appendix~\ref{sec:proof_detail}.

Hence, we have established that among all tooth shapes with the same area, the square tooth offers the highest retrieval efficiency. Then similar to the argument at the end of Sec.~\ref{sec:symmetric}, to obtain the globally optimal tooth shape we only need to optimize the square tooth, and the optimal half width for the square tooth can be easily obtained~\cite{bonarota2010efficiency}. We also present it explicitly in Appendix~\ref{sec:afc_eff}.

\subsection{Effect of background absorption}
The optimality naturally extends to the scenario where the tooth shape has a non-zero background absorption, i.e. the minimum value of the shape function is a constant $\alpha_\mathrm{bg}>0$.

We start by clarifying that when there exists a non-zero background absorption, the tooth shape refers to the shape above the background. We thus decompose the tooth shape function as $f(\omega) = f_\mathrm{bg}(\omega) + f_\mathrm{abg}(\omega)$, where subscript ``bg'' refers to the constant background and ``abg'' denotes above-background. Then we can write the retrieval efficiency as:
\begin{align}\label{eqn:background}
    \eta(L) = |(F_{\mathrm{bg},-1}+F_{\mathrm{abg},-1})L|^2e^{-(F_{\mathrm{bg},0}+F_{\mathrm{abg},0})L},
\end{align}
where $F_{\mathrm{bg},-1}$ and $F_{\mathrm{bg},0}$ are real constants. Then, following the previous proofs, we consider tooth shapes with identical areas, so that we only need to maximize $|(F_{\mathrm{bg},-1}+F_{\mathrm{abg},-1})|^2$. According to Proposition~\ref{thm:all_comb}, square $f_\mathrm{abg}(\omega)$ will achieve the highest $|F_{\mathrm{abg},-1}|$. Moreover, $F_{\mathrm{abg},-1}$ is real for square $f_\mathrm{abg}(\omega)$. Therefore, by the vector addition argument, it is obvious that square $f_\mathrm{abg}(\omega)$ will achieve the highest $|(F_{\mathrm{bg},-1}+F_{\mathrm{abg},-1})|$.

We comment that with a finite background optical depth $\mathrm{OD}_0=d_0L$, the tooth width needs to be optimized for a new effective optical depth $\mathrm{OD}'=\mathrm{OD}-\mathrm{OD}_0$. The effect of background in AFC absorption profile was first considered in~\cite{de2008solid}, where the authors approximate the effect of the background as a reduction factor on the expected efficiency. However, according to Eqn.~\ref{eqn:background} it is clear that the effect of a constant background may depend on shape of comb above it. Nonetheless, we manage to show that it does not affect the optimality of above-background square tooth shape.

\subsection{Effect of optical linewidth}
We have proved the retrieval efficiency optimality of the square-tooth AFC. However, in practice where the optical linewidth is always finite so the ideal square-tooth is never achievable. It has been commonly considered~\cite{bonarota2010efficiency} that the actual observable tooth shape will be the convolution of the ideal \textit{target} tooth shape $f(\omega)$ that we aim at generating and the normalized optical line shape $\mathcal{L}(\omega)$, i.e. $f(\omega)\rightarrow f(\omega)*\mathcal{L}(\omega)$. Note that the line shape should be normalized since the actual tooth shape should be identical to the target tooth shape when the optical linewidth is zero. In fact, we can show that if we fix $\mathcal{L}(\omega)$, such convolution does not affect the optimality of the square shape.

Recall the efficiency functional in Eqn.~\ref{eqn:eff}. Now we want to replace $f(\omega)$ with $f(\omega)*\mathcal{L}(\omega)$:
\begin{align}
    \eta(L) =& \left\vert\frac{LT}{2\pi}\int_{-\pi/T}^{\pi/T}f(\omega)*\mathcal{L}(\omega)e^{i\omega T}d\omega\right\vert^2 \nonumber\\
    &\times e^{-\frac{LT}{2\pi}\int_{-\pi/T}^{\pi/T}f(\omega)*\mathcal{L}(\omega)d\omega} \nonumber\\
    \approx & \left\vert\frac{LT}{2\pi}\left(\int_{-\pi/T}^{\pi/T}\mathcal{L}(\omega)e^{i\omega T}d\omega\right)\left(\int_{-\pi/T}^{\pi/T}f(\omega)e^{i\omega T}d\omega\right)\right\vert^2 \nonumber\\
    &\times e^{-\frac{LT}{2\pi}\left(\int_{-\pi/T}^{\pi/T}\mathcal{L}(\omega)d\omega\right)\left(\int_{-\pi/T}^{\pi/T}f(\omega)d\omega\right)} \nonumber\\
    =& |F_{-1}L'|^2e^{-F_0L''},
\end{align}
where for the approximation we have used the convolution theorem to decouple the integrals, and assumed that the intrinsic line shape is not too wide. In the end, we have defined $L'=L\left\vert\int_{-\pi/T}^{\pi/T}\mathcal{L}(\omega)e^{i\omega T}d\omega\right\vert$ and $L''=L\int_{-\pi/T}^{\pi/T}\mathcal{L}(\omega)d\omega$, which are constants as long as the line shape $\mathcal{L}(\omega)$ is a fixed function, and thus do not affect the previous proof. Therefore, the square tooth shape is still the most desirable target tooth shape. Although the final observable tooth shape will be different from the perfect square shape, it still gives the highest achievable efficiency, as long as the intrinsic line shape does not vary.

We can re-write the efficiency as:
\begin{equation}
\begin{aligned}
    \eta(L) =& \left\vert\frac{L'}{L''}\right\vert^2|F_{-1}L''|^2e^{-F_0L''}\\
    \approx& \left\vert\int_{-\infty}^{\infty}\mathcal{L}(\omega)e^{i\omega T}d\omega\right\vert^2|F_{-1}L|^2e^{-F_0L},
\end{aligned}
\end{equation}
where we have used the approximate normalization condition $\int_{-\pi/T}^{\pi/T}\mathcal{L}(\omega)d\omega\approx\int_{-\infty}^{\infty}\mathcal{L}(\omega)d\omega=1$, and $\int_{-\pi/T}^{\pi/T}\mathcal{L}(\omega)e^{i\omega T}d\omega\approx\int_{-\infty}^{\infty}\mathcal{L}(\omega)e^{i\omega T}d\omega$, both assuming that the intrinsic line width is much smaller than the comb period $2\pi/T$. It is then obvious that, as long as the actual shape can be expressed as a convolution of the target shape and a fixed kernel, the effect of the finite optical linewidth is simply to scale the ideal efficiency by a multiplicative constant determined by the Fourier transform of the intrinsic line shape. We note that in practice the convolution kernel can have complicated form, but in general it will depend on optical coherence time of the atoms. In addition, the retrieval efficiency's dependence on the square-tooth width, comb period and optical depth is unchanged, which makes the optimal width of the square tooth unchanged as well, independent of the optical linewidth. We emphasize again that in experimental scenarios, after the optical depth and the desired comb period are determined, the corresponding optimized square-tooth width can be obtained analytically~\cite{bonarota2010efficiency}.

%---------------------------------------------------------------------------------
\section{Comparison with Lorentzian and Gaussian tooth shapes}\label{sec:comparison}
%---------------------------------------------------------------------------------
In practice, errors in the control of optical pumping may result in deviation from the optimal tooth shape as target. For instance, suppose we want to create the optimal square-tooth AFC under a certain optical depth constraint. The actual tooth width might be different from what we intend to create. Therefore, it is important to examine how robust and practical is the advantage of the square-tooth AFC in retrieval efficiency. Here we make a direct comparison among square, Lorentzian, and Gaussian shapes of AFC teeth by evaluating the achievable retrieval efficiency under different tooth widths and optical depths. Recall that we have shown that the inclusion of finite optical linewidth only scales the efficiency by a multiplicative constant determined by the intrinsic line shape. For the three tooth shapes considered in this section which are symmetric, the effect of finite background absorption is also just a reduction factor. Therefore, without loss of generality, we focus on the ideal tooth shapes.

\subsection{Retrieval efficiencies under different tooth widths and optical depths}
We still impose a physical constraint on maximal height of the comb shape $\alpha_M$. In this case, we consider Lorentzian and Gaussian line shapes with FWHM $\Gamma$ (then half FWHM is $\Gamma/2$) described by
\begin{align}
& L_{\alpha_M,\Gamma}(\omega) = \frac{\alpha_M\Gamma^2}{\Gamma^2 + 4\omega^2},\\ & G_{\alpha_M,\Gamma}(\omega) = \alpha_Me^{-4\ln 2\frac{\omega^2}{\Gamma^2}},
\end{align}
respectively. In order to obtain their corresponding retrieval efficiencies according to Equation~\ref{eqn:eff}, we need to evaluate two definite integrals which do not result in analytical expressions that consist of elementary functions, but involve non-elementary integrals such as error function, cosine and sine integrals. Therefore, we would like to evaluate the efficiencies numerically and thus we would like to make the functions dimensionless to get rid of the influence of storage time $T$ and crystal length $L$:
\begin{align}
    \eta\left[f(\omega)\right] =& \left(\frac{TL}{2\pi}\int_{-\pi/T}^{\pi/T}f(\omega)\cos(\omega T)dx\right)^2 e^{-\frac{TL}{2\pi}\int_{-\pi/T}^{\pi/T}f(\omega)d\omega}\nonumber\\
    =& \left(\frac{L}{2\pi}\int_{-\pi/T}^{\pi/T}f(\omega)\cos(\omega T)d(\omega T)\right)^2 e^{-\frac{L}{2\pi}\int_{-\pi/T}^{\pi/T}f(\omega)d(\omega T)}\nonumber\\
    =& \left(\frac{L}{2\pi}\int_{-\pi}^{\pi}\bar{f}(t)\cos(t)dt\right)^2 e^{-\frac{L}{2\pi}\int_{-\pi}^{\pi}\bar{f}(t)dt}
\end{align}
where $t\coloneqq \omega T$ and $\bar{f}(t)=\bar{f}(\omega T)=f(\omega)$, while we have also assumed symmetric comb shapes s.t. $f(\omega)=f(-\omega)$ which is satisfied by $L(\omega)$ and $G(\omega)$. Then for Lorentzian and Gaussian shapes the numerics-friendly expressions of retrieval efficiencies are
\begin{align}
    \eta_L(p,\mathrm{OD}) =& \left(\frac{1}{2\pi}\int_{-\pi}^{\pi}\frac{p^2\mathrm{OD}}{p^2+4t^2}\cos(t)dt\right)^2 e^{-\frac{1}{2\pi}\int_{-\pi}^{\pi}\frac{p^2\mathrm{OD}}{p^2+4t^2}dt},\\
    \eta_G(p,\mathrm{OD}) =& \left(\frac{1}{2\pi}\int_{-\pi}^{\pi}\mathrm{OD}e^{-4\ln 2\frac{t^2}{p^2}}\cos(t)dt\right)^2 e^{-\frac{1}{2\pi}\int_{-\pi}^{\pi}\mathrm{OD}e^{-4\ln 2\frac{t^2}{p^2}}dt},
\end{align}
respectively, where the effect of storage time $T$ is represented by a phase factor $p\coloneqq\Gamma T\in[0,2\pi]$ (dimensionless width which is proportional to the inverse finesse), while the effect of crystal length $L$ is represented by the maximum optical depth $\mathrm{OD}\coloneqq \alpha_ML\geq 0$. For the square tooth with FWHM $\Gamma$ and OD constraint $\alpha_ML$ the efficiency is~\cite{bonarota2010efficiency} (for review of other analytical properties of square-tooth AFC see Appendix~\ref{sec:afc_eff})
\begin{align}
    \eta_S(p,\mathrm{OD}) = \frac{\mathrm{OD}^2\sin^2(p)}{\pi^2}e^{-\frac{p\mathrm{OD}}{\pi}}.
\end{align}
The retrieval efficiencies for AFCs with square, Lorentzian and Gaussian teeth under different optical depths $\mathrm{OD}$ and dimensionless widths $p$ are visualized in Figure~\ref{fig:efficiency} to offer a more comprehensive view of the AFC performance under varying experimental conditions. There is a noteworthy feature of AFC retrieval efficiencies that emerges for different tooth shapes: As the maximum optical depth $\mathrm{OD}$ increases, the desired range of tooth widths decreases, outside which the retrieval efficiency will decrease quickly as the width deviates from the optimal width. Among the three typical tooth shapes considered here, the Lorentzian tooth is arguably the ``worst'', in that it achieves the lowest efficiency under a fixed $\mathrm{OD}$, while the desirable range of the tooth width is the narrowest. On the other hand, the advantage of the square tooth is conspicuous visually, which will be further elaborated in the following.

\begin{figure}[t]
    \centering
    \includegraphics[width=\linewidth]{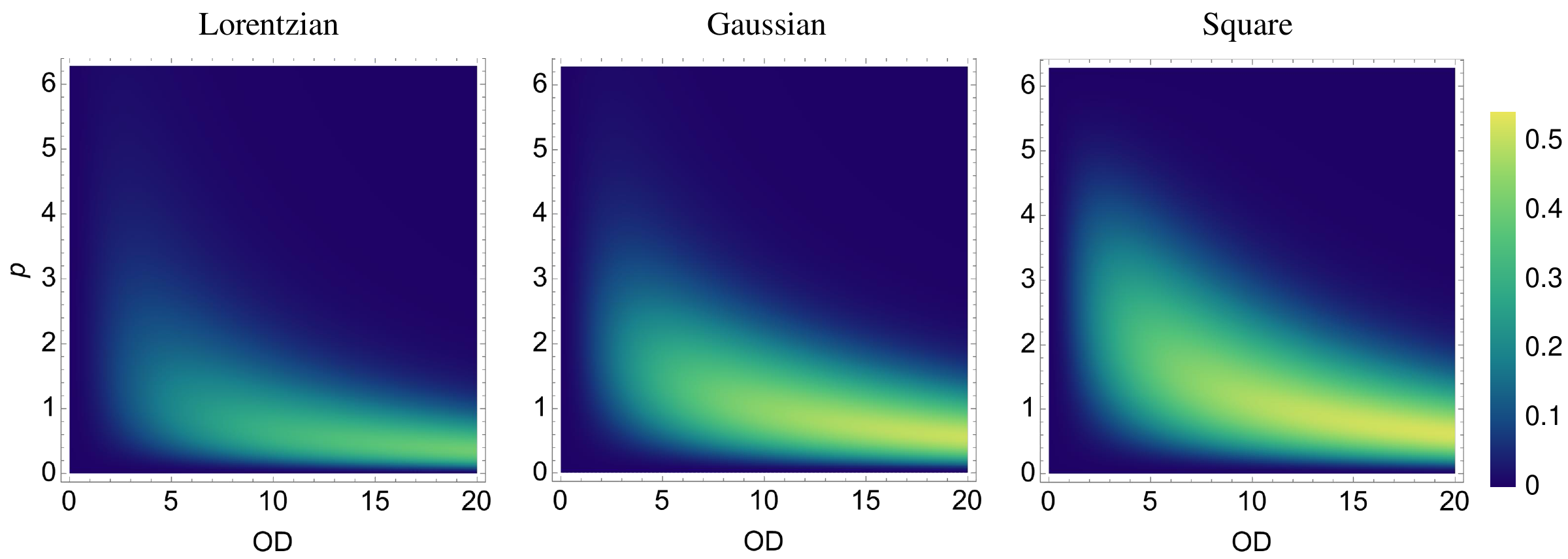}
    \caption{Retrieval efficiencies of AFC with Lorentzian, Gaussian, and square teeth, for different tooth widths $p=\Gamma T\in[0,2\pi]$ and optical depths $\mathrm{OD}=\alpha_ML\in[0,20]$. The color map is identical for all three figures, with the darkest blue color corresponding to zero efficiency and the brightest yellow color corresponding to $0.54$ which is the upper limit for forward retrieval efficiency of AFC memories.}
    \label{fig:efficiency}
\end{figure}

\subsection{Robust advantage of square tooth}
We demonstrate the robustness of the square-tooth AFC's advantage in retrieval efficiency. Specifically, we compare the retrieval efficiency of the square-tooth AFC with different widths and the optimal retrieval efficiency of AFCs with Lorentzian or Gaussian teeth, under fixed optical depths. The retrieval efficiencies of AFCs with Lorentzian or Gaussian teeth as functions of $p$ and $\mathrm{OD}$ do not have simple closed forms, but their maximum values given fixed $\mathrm{OD}$ can be determined numerically. 

We first examine the absolute difference between the retrieval efficiency of square-tooth AFC with different widths and the optimal retrieval efficiency of Lorentzian-tooth and Gaussian-tooth AFCs, under fixed maximum optical depth $\mathrm{OD}$. Specifically, we calculate the following quantity
\begin{align}
    D_{L/G}(p,\mathrm{OD}) = \eta_S(p,\mathrm{OD}) - \max_{p'\in[0,2\pi]}\eta_{L/G}(p',\mathrm{OD}).
\end{align}
We visualize $D_L(p,\mathrm{OD})$ and $D_G(p,\mathrm{OD})$ in the upper panel of Figure~\ref{fig:comparison}. We also consider the relative difference between the retrieval efficiency of square-tooth AFC and the optimal retrieval efficiency of Lorentzian-tooth and Gaussian-tooth AFCs, under fixed $\mathrm{OD}$. Specifically, we calculate the following quantity
\begin{align}
    R_{L/G}(p,\mathrm{OD}) = \frac{\eta_S(p,\mathrm{OD}) - \max_{p'\in[0,2\pi]}\eta_{L/G}(p',\mathrm{OD})}{\max_{p'\in[0,2\pi]}\eta_{L/G}(p',\mathrm{OD})}.
\end{align}
$R_L(p,\mathrm{OD})$ and $R_G(p,\mathrm{OD})$ are visualized in the lower panel of Figure~\ref{fig:comparison}, to complement $D_L(p,\mathrm{OD})$ and $D_G(p,\mathrm{OD})$. 

It is noteworthy that for the visualization we have normalized any negative value to zero, which corresponds to the darkest blue that is uniform in most areas in each subfigure. For each subfigure any value higher than the maximum value in the color bar is also normalized to the maximum for the color bar, corresponding to the brightest yellow, with the lower right panel for $R_G(p,\mathrm{OD})$ as an example. Therefore, in the region where the color is green or yellow we have that $D/R_{L/G}(p,\mathrm{OD})>0$, i.e. the retrieval efficiency of square-tooth AFC with relative tooth width $p$ is higher than the optimal retrieval efficiency that can be achieved by Lorentzian-tooth and Gaussian-tooth AFCs under the same maximum optical depth $\mathrm{OD}$. The existence of such regions means that in practice we do not have to create square teeth with optimal widths to demonstrate advantage over other practical tooth shapes such as Lorentzian and Gaussian using the same atomic ensemble (such that $\mathrm{OD}$ is unchanged), i.e. the advantage is robust. It can be observed that as the maximum optical depth $\mathrm{OD}$ increases the range of tooth widths which support advantage in retrieval efficiency decreases, which is justified by the feature of retrieval efficiencies. 

\begin{figure}[t]
    \centering
    \includegraphics[width=0.85\linewidth]{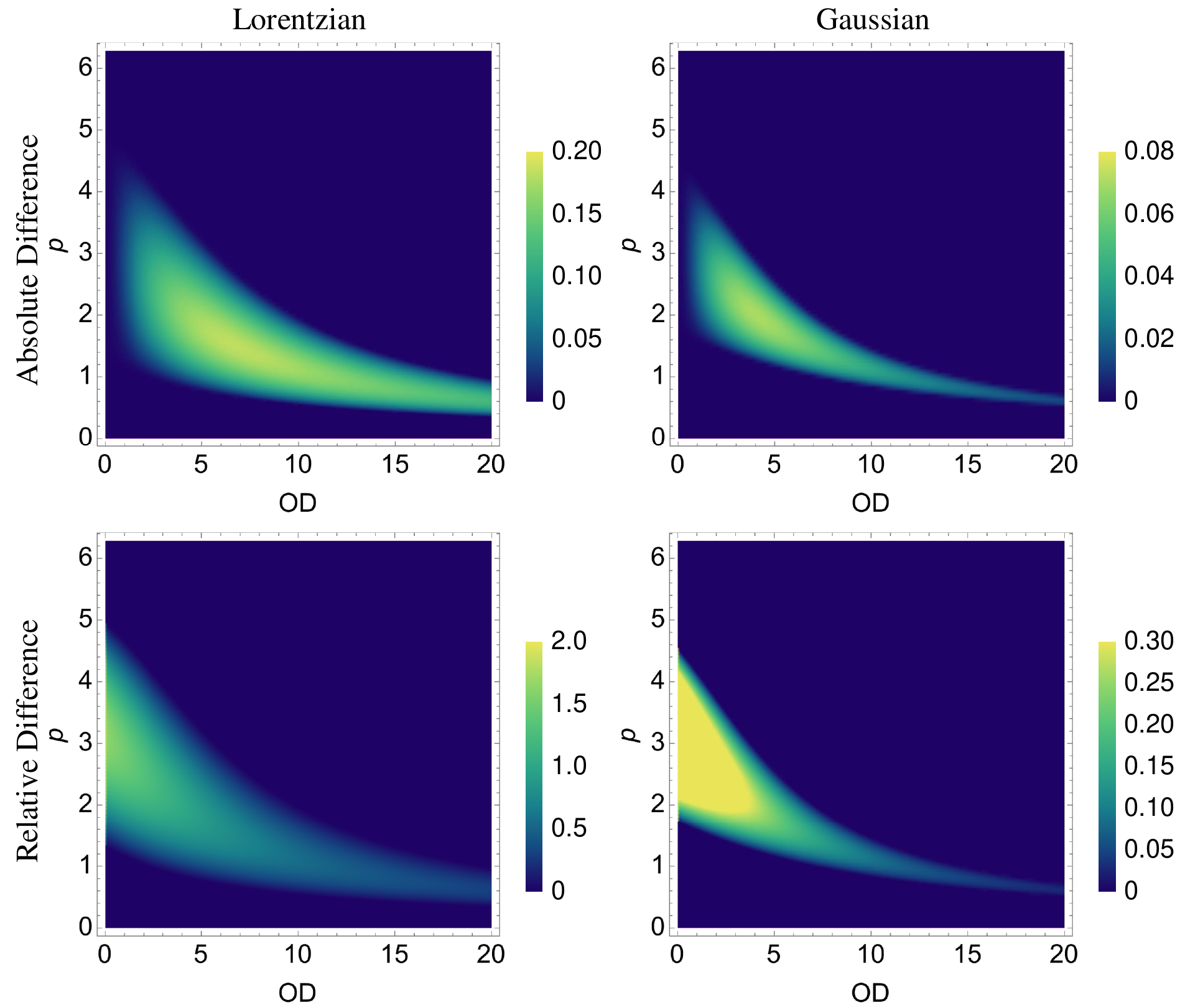}
    \caption{Comparison between the retrieval efficiency of square-tooth AFC and the optimal retrieval efficiency of Lorentzian-tooth and Gaussian-tooth AFC, under fixed maximum optical depth $\mathrm{OD}$. The upper panel demonstrates the absolute differences $D_L(p,\mathrm{OD})$ and $D_G(p,\mathrm{OD})$. The lower panel demonstrates the relative differences $R_L(p,\mathrm{OD})$ and $R_G(p,\mathrm{OD})$. All functions are visualized for $p\in[0,2\pi]$ and $\mathrm{OD}\in[0,20]$. Any negative value is normalized to zero, which corresponds to the darkest blue, while for each subplot any value higher than the maximum value of the color bar is also normalized to brightest yellow.}
    \label{fig:comparison}
\end{figure}

%---------------------------------------------------------------------------------
\section{Conclusion and discussion}\label{sec:conclusion}
%---------------------------------------------------------------------------------
We offer a rigorous proof which shows that the square tooth with an optimized width is the best tooth shape for AFC memories, based on a semi-classical model of AFC absorption and retrieval processes. It achieves the highest retrieval efficiency, even when finite background absorption and optical linewidth are considered. Although the ideal square-tooth AFC is not obtainable in practice, our results suggest that a target square-tooth comb will still lead to the highest retrieval efficiency. We further reinforce the realistic advantage of square-tooth AFC through explicit comparison with two other common tooth shapes, Lorentzian and Gaussian, which reveals the robustness of square-tooth AFC's retrieval efficiency advantage even when the tooth width deviates a bit from the optimal width.

This work demonstrates an interesting example of applying mathematical analysis techniques to quantum memory theory. We have also identified generalized functional optimization problems to which the proof technique can be immediately applied, as presented in Appendix~\ref{sec:generalization}. The techniques used in this work are expected to find use in or inspire mathematically rigorous studies of the optimal configuration for other physical scenarios, especially different quantum memory protocols which have higher efficiency upper bounds, such as backward retrieval of AFC and cavity-enhanced AFC~\cite{afzelius2010impedance,moiseev2010efficient}. Notably, cavity enhancement could potentially result in additional requirement on pulse engineering as imposed by the cavity mode, so the optimization of cavity-enhanced quantum memories still requires detailed analysis, which we leave for future work. Nevertheless, cavities whose mode line widths are larger than the AFC bandwidth can still be used for enhancing AFC memories~\cite{afzelius2010impedance} as demonstrated experimentally in ~\cite{zhong2017nanophotonic,craiciu2019nanophotonic}, and in such cases our results are still directly applicable. Moreover, it is interesting to take into account further realistic details in the quantum memory protocols. For instance, recent studies, e.g.~\cite{ortu2022multimode}, have discovered more subtle and non-trivial interplay between optical pumping schemes and the observed tooth shape. The interplay between the spectral width of the AFC structure and the spectral width of the signal fields could also limit the efficiency and accuracy of the AFC protocol in contrast to the CRIB protocol~\cite{moiseev2012rephasing,arslanov2017optimal,arslanov2019maps}. In addition, we would like to reemphasize that we have assumed ideal periodic comb in our analysis, which is not exactly the same in practice where the tooth height will change depending on the distance from the center of the inhomogeneous broadening spectrum. The impact of realistic aperiodicity in the comb can be an interesting topic to study from both mathematical and practical perspectives. We leave the exploration of these aspects for future works.

%---------------------------------------------------------------------------------
\section*{Data availability statement
}
%---------------------------------------------------------------------------------
No new data were created or analysed in this study.

%---------------------------------------------------------------------------------
\section*{Acknowledgments}
%---------------------------------------------------------------------------------
A.Z. would like to thank Yuzhou Chai for helpful discussions. This work is funded by the NSF Quantum Leap Challenge Institute for Hybrid Quantum Architectures and Networks (NSF Grant No. 2016136).

\appendix

%---------------------------------------------------------------------------------
\section{Detailed solution to the Maxwell-Bloch equations~\ref{eqn:MB_eqn}}\label{sec:solution}
%---------------------------------------------------------------------------------
For completeness and pedagogical benefits, we provide a detailed solution to the Maxwell-Bloch equations used in both~\cite{bonarota2010efficiency} and this work.

We consider that when the incoming signal pulse has a characteristic time $\tau$ that is much longer than the time it travels through the medium $L/c$, we can ignore the temporal retardation effects~\cite{afzelius2009multimode}, i.e. we can neglect the temporal derivative in the first equation in Eqn.~\ref{eqn:MB_eqn} which describes the field (Rabi frequency) dynamics. Then it is natural to work in the frequency domain by performing the Fourier transform on both sides of the equations, where $\tilde{\omega}$ denote the frequency domain argument to differentiate from the frequency detuning $\omega$. Subsequently the equations are reduced to:
\begin{equation}
\begin{aligned}
    \partial_z \Omega(z,\tilde{\omega}) =& -\frac{i}{2\pi}\int d\omega f(\omega)P(\omega;z,\tilde{\omega}),\\
    i\tilde{\omega} P(\omega;z,\tilde{\omega}) =& -(i\omega + \gamma)P(\omega;z,\tilde{\omega}) - i\Omega(z,\tilde{\omega}),
\end{aligned}
\end{equation}
where the Fourier transform $\Omega(z,\tilde{\omega})$ has a series expansion due to our previous retrieval pulse sequence ansatz:
\begin{equation}
\begin{aligned}
    \Omega(z,\tilde{\omega}) =& \mathcal{F}[\sum_{p\geq 0}a_p(z)\Omega(0,t-pT)]\\
    =& \sum_{p\geq 0}a_p(z)e^{-ip\tilde{\omega} T}\mathcal{F}[\Omega(0,t)]\\
    =& \Omega(0,\tilde{\omega})\sum_{p\geq 0}a_p(z)e^{-ip\tilde{\omega} T},
\end{aligned}
\end{equation}
where $\mathcal{F}[\cdot]$ denotes the Fourier transform from the time domain to the frequency domain, and $\Omega(0,\tilde{\omega})\coloneqq\mathcal{F}[\Omega(0,t)]$. The integral on the right hand side of the first equation can be expanded as:
\begin{equation}\label{eqn:delta_int}
\begin{aligned}
    &\int_{-\infty}^{\infty} d\omega f(\omega)P(\omega;z,\tilde{\omega})\\
    =& \sum_{n\leq 0}F_n \int_{-\infty}^{\infty} d\omega e^{-in\omega T}P(\omega;z,\tilde{\omega}),
\end{aligned}
\end{equation}
where the infinite lower limit of the integral is an approximation, as given a finite center resonance frequency, the lower bound on detuning is finite. From the second equation above we obtain the relation between the field and the polarization:
\begin{equation}
    P(\omega;z,\tilde{\omega}) = \frac{-1}{\tilde{\omega} + \omega - i\gamma}\Omega(z,\tilde{\omega}).
\end{equation}
Then the integral in Eqn.~\ref{eqn:delta_int} can be further written as:
\begin{equation}
\begin{aligned}
    &\sum_{n\leq 0}F_n \int_{-\infty}^{\infty} d\omega e^{-in\omega T}P(\omega;z,\tilde{\omega})\\
    =& -\Omega(z,\tilde{\omega})\sum_{n\leq 0}F_n \int_{-\infty}^{\infty} d\omega \frac{e^{-in\omega T}}{\tilde{\omega} + \omega - i\gamma}.
\end{aligned}
\end{equation}
One can notice that the integral on the right hand side is closely related to the Hilbert transform~\cite{kschischang2006hilbert,king2009hilbert} which is defined as
\begin{align}
    \mathcal{H}[u(t)]\coloneqq \frac{1}{\pi}\mathrm{pv}\int_{-\infty}^{\infty}d\tau \frac{u(\tau)}{t-\tau},
\end{align}
where $\mathrm{pv}$ denotes the Cauchy principal value. Besides the Hilbert transform, we note another widely-used identity
\begin{align}
    \frac{1}{x+i0^{\pm}}=\mathrm{pv}\frac{1}{x}\mp i\pi\delta(x).
\end{align}
To utilize this identity for deriving analytical formulae, we consider another approximation of small homogeneous broadening, i.e. $\gamma\rightarrow 0^+$~\cite{sonajalg1994diffraction}. Then we explicitly evaluate the aforementioned integral for two cases: $n=0$ and $n<0$. When $n=0$ we have:
\begin{equation}
\begin{aligned}
    &\int_{-\infty}^{\infty} d\omega \frac{1}{\tilde{\omega} + \omega - i\gamma}\\
    =& \mathrm{pv}\int_{-\infty}^{\infty} d\omega \frac{1}{\tilde{\omega} + \omega} + i\pi\int_{-\infty}^{\infty} d\omega \delta(\tilde{\omega} + \omega) = i\pi.
\end{aligned}
\end{equation}
When $n<0$ we have:
\begin{align}
    &\int_{-\infty}^{\infty} d\omega \frac{e^{-in\omega T}}{\tilde{\omega} + \omega - i\gamma}\nonumber\\
    =& \mathrm{pv}\int_{-\infty}^{\infty} d\omega \frac{e^{-in\omega T}}{\tilde{\omega} + \omega} + i\pi\int_{-\infty}^{\infty} d\omega e^{-in\omega T}\delta(\tilde{\omega} + \omega)\nonumber\\
    =& \pi\mathcal{H}[e^{-i(nT)\omega}] + i\pi e^{in\tilde{\omega} T} = 2i\pi e^{in\tilde{\omega} T}.
\end{align}
Equipped with the above results, we return to the Maxwell-Bloch equations which have been reduced to:
\begin{equation}
\begin{aligned}
    &\sum_{p\geq 0}\left[\partial_za_p(z)\right]e^{-ip\tilde{\omega} T} \\
    =& \sum_{p\geq 0}a_p(z)e^{-ip\tilde{\omega} T}\frac{i}{2\pi}\sum_{n\leq 0}F_n \int_{-\infty}^{\infty} d\omega \frac{e^{-in\omega T}}{\tilde{\omega} + \omega - i\gamma}.
\end{aligned}
\end{equation}

According to the definition of retrieval efficiency $\eta(L)\coloneqq|a_1(L)|^2$, to obtain $a_1(z)$ we only need to solve for two equations corresponding to $p=0,1$:
\begin{align}
    \partial_za_0(z) =& -\frac{1}{2}F_0a_0(z),\\
    \partial_za_1(z) =& -\frac{1}{2}F_0a_1(z) - F_{-1}a_0(z),
\end{align}
which must satisfy two boundary conditions: zero decay of input signal at $z=0$, i.e. $a_0(0)=1$; zero forward retrieval signal at $z=0$, i.e. $a_1(0)=0$. Then it is easy to derive their explicit expressions:
\begin{align}
    & a_0(z)=e^{-F_0z/2},\\
    & a_1(z)=-F_{-1}e^{-F_0z/2}z.
\end{align}

%---------------------------------------------------------------------------------
\section{Detailed proofs}\label{sec:proof_detail}
%---------------------------------------------------------------------------------
In this section we provide detailed proofs of propositions in the main text.

\subsection{Proof of Proposition 1}
\begin{proof}
First recall that the square tooth shape $f_s(\omega)$ takes value $\alpha_M$ on $[-\Gamma,\Gamma]$, and is zero on $[-\pi/T,-\Gamma]\cup[\Gamma,\pi/T]$. 

We consider an arbitrary symmetric shape $f(\omega)$ with the same area as the square tooth, which means that $f(\omega)$ satisfies:
\begin{align}\label{eqn:equal_area}
    \int_0^{\pi/T}f(\omega)d\omega = \int_0^{\Gamma}\alpha_Md\omega = \alpha_M\Gamma.
\end{align}
To prove the proposition, we would like to show that $f(\omega)$ cannot achieve higher retrieval efficiency than the square shape $f_s(\omega)$, i.e.
\begin{align}\label{eqn:higher_eff}
    \int_0^{\pi/T}f(\omega)\cos(\omega T)d\omega \leq \int_0^{\Gamma}\alpha_M\cos(\omega T)d\omega.
\end{align}
Note that in principle we need to account for $f(\omega)$ satisfying $\int_0^{\pi/T}f(\omega)\cos(\omega T)d\omega \leq 0$, which corresponds to the case where the tooth is more concentrated on $[\pi/2T,\pi/T]$. However, in such cases we can always redefine coordinate for one single tooth under periodic boundary condition, so that under the new coordinate $\int_0^{\pi/T}f(\omega)\cos(\omega T)d\omega\geq 0$, which we focus on without loss of generality.

For simplicity we define a function $\tilde{f}(\omega)$ as the difference between the function $f(\omega)$ and the square function $f_s(\omega)$:
\begin{equation}
\begin{aligned}
    \tilde{f}(\omega) &\coloneqq f(\omega) - f_s(\omega)\\
    &= 
    \begin{cases}
    f(\omega) - \alpha_M & \omega\in[0, \Gamma]\\
    f(\omega) & \omega\in[\Gamma, \pi/T]
    \end{cases},
\end{aligned}
\end{equation}
which satisfies $\int_0^{\pi/T}\tilde{f}(\omega)d\omega=0$ according to the assumption of identical area. The proof is then divided two scenarios: (i) $\Gamma\leq\pi/2T$, and (ii) $\Gamma>\pi/2T$.

In scenario (i) we have:
\begin{align}
    &\int_0^{\pi/T}\tilde{f}(\omega)\cos(\omega T)d\omega \nonumber\\
    =& \int_0^{\Gamma}\tilde{f}(\omega)\cos(\omega T)d\omega + \int_{\Gamma}^{\pi/T}\tilde{f}(\omega)\cos(\omega T)d\omega \nonumber\\
    \leq& \cos(\Gamma T)\int_0^{\Gamma}\tilde{f}(\omega)d\omega + \cos(\Gamma T)\int_{\Gamma}^{\pi/T}\tilde{f}(\omega)d\omega \nonumber\\
    =& \cos(\Gamma T)\int_0^{\pi/T}\tilde{f}(\omega)d\omega = 0,
\end{align}
where for the inequality we used the following facts:
\begin{align}
    & \tilde{f}(\omega)\leq 0,~\omega\in[0,\Gamma],\\
    & \cos(\omega T)\geq\cos(\Gamma T)\geq 0,~\omega\in[0,\Gamma],\\
    & \cos(\omega T)\leq\cos(\Gamma T),~\omega\in[\Gamma,\pi/2T],
\end{align}
Then we have:
\begin{equation}
    \int_0^{\pi/T}f(\omega)\cos(\omega T)d\omega \leq \int_0^{\Gamma}\alpha_M\cos(\omega T)d\omega.
\end{equation}

In scenario (ii), we have:
\begin{equation}
\begin{aligned}
    &\int_0^{\pi/T}\tilde{f}(\omega)\cos(\omega T)d\omega\\
    =& \int_0^{\pi/2T}\tilde{f}(\omega)\cos(\omega T)d\omega + \int_{\pi/2T}^{\Gamma}\tilde{f}(\omega)\cos(\omega T)d\omega\\
    &+ \int_{\Gamma}^{\pi/T}\tilde{f}(\omega)\cos(\omega T)d\omega\\
    \leq& 0 + \cos(\Gamma T)\int_{\pi/2T}^{\Gamma}\tilde{f}(\omega)d\omega + \cos(\Gamma T)\int_{\Gamma}^{\pi/T}\tilde{f}(\omega)d\omega \leq 0.\label{eqn:scenario2_bound}
\end{aligned}
\end{equation}
In the above, for the first inequality we have considered the following facts. 
\begin{align}
    \tilde{f}(\omega)\leq 0,~\omega\in[0,\pi/2T],
\end{align}
because $f(\omega)\leq \alpha_M$ when $\omega\in[0,\Gamma]$, while $\Gamma>\pi/2T$ in scenario (ii). This gives the upper bound for the first term in Eqn.~\ref{eqn:scenario2_bound}.
\begin{align}
    \cos(\Gamma T)\tilde{f}(\omega)\geq\cos(\omega T)\tilde{f}(\omega)\geq 0,~ \omega\in[\pi/2T,\Gamma],
\end{align}
because on this interval we have $0\geq\cos(\omega T)\geq\cos(\Gamma T)$ while again $\tilde{f}(\omega)\leq 0$. This gives the upper bound for the second term. 
\begin{align}
    0\geq\cos(\Gamma T)\tilde{f}(\omega)\geq\cos(\omega T)\tilde{f}(\omega),~\omega\in[\Gamma,\pi/T],
\end{align}
because $\tilde{f}(\omega)\geq 0$ while $0\geq\cos(\Gamma T)\geq\cos(\omega T)$ on this interval, which gives the upper bound for the third term. 

For the second inequality we have used:
\begin{align}
    & \cos(\Gamma T)<0,~\Gamma>\pi/2T,\\
    & \int_{\pi/2T}^{\pi/T}\tilde{f}(\omega)d\omega = \int_0^{\pi/T}\tilde{f}(\omega)d\omega - \int_0^{\pi/2T}\tilde{f}(\omega)d\omega\nonumber\\
    &= 0 - \int_0^{\pi/2T}\tilde{f}(\omega)d\omega \geq 0.
\end{align}
In the end we have:
\begin{equation}
    \int_0^{\pi/T}f(\omega)\cos(\omega T)d\omega \leq \int_0^{\Gamma}\alpha_M\cos(\omega T)d\omega.
\end{equation}

The proposition is thus proved.
\end{proof}

\subsection{Proof of Lemma 2}
\begin{proof}
The family of square shapes can be explicitly formulated as a piecewise function:
\begin{equation}
    f_{(\Gamma,c)}(\omega) = 
    \begin{cases}
    0 & \text{if } \omega\in[-\frac{\pi}{T}, c-\Gamma]\cup[c+\Gamma, \frac{\pi}{T}]\\
    \alpha_M & \text{if } \omega\in[c-\Gamma, c+\Gamma]
    \end{cases}.
\end{equation}
Then the objective integral can be evaluated directly as:
\begin{equation}
    I(\Gamma,c) = \alpha_M\int_{c-\Gamma}^{c+\Gamma}\cos(\omega T)d\omega = \frac{2\alpha_M\sin(\Gamma T)\cos(cT)}{T}
\end{equation}
And we can evaluate its partial derivative against the center $c$ to get:
\begin{equation}
    \frac{\partial}{\partial c}I(\Gamma,c) = -\frac{2\alpha_M\sin(\Gamma T)\sin(cT)}{T}
\end{equation}
which takes zero value at $c=0$ and is negative for all $c\in[0,\pi/T-\Gamma]$, positive for all $c\in[-\pi/T+\Gamma,0]$, as $0\leq\Gamma\leq\pi/T$ results in $\sin(\Gamma T)\geq 0$.
Therefore we conclude that for a fixed half width $\Gamma$, for $-\pi/T+\Gamma\leq c\leq\pi/T-\Gamma$ the integral has the maximal value when $c=0$.
\end{proof}

\subsection{Proof of Proposition 3}
\begin{proof}
Recall that we have expressed $|F_{-1}|$ as
\begin{align}
    |F_{-1}| = \frac{T}{2\pi}\left\vert\int_{-\pi/T}^{\pi/T}f'(\omega')\cos(\omega'T)d\omega'\right\vert,
\end{align}
and the redefined $f'(\omega')$ is still a valid shape function. Moreover, after the redefinition, $f'(\omega')$ will satisfy the following condition that $\int_{-\pi/T}^{0}f'(\omega')\sin(\omega' T)d\omega' = -\int_{0}^{\pi/T}f'(\omega')\sin(\omega' T)d\omega'$. Therefore, in the following we will focus on $f'(\omega')$, and denote it as $f(\omega)$ for simplicity. 

For $f(\omega)$ s.t. $\int_{-\pi/T}^{\pi/T}f(\omega)d\omega=S$, we can denote their integral on $[0,\pi/T]$ and $[-\pi/T,0]$ as $S_r$ and $S_l$, respectively, s.t. $S = S_r+S_l$. Then we define two square shapes on either side of $x=0$ with the areas being $S_r$ and $S_l$:
\begin{equation}
\begin{aligned}
    &f_l(\omega)=
    \begin{cases}
    0 & \text{if } \omega\in[-\frac{\pi}{T}, -\frac{S_l}{\alpha_M}]\cup[0, \frac{\pi}{T}]\\
    \alpha_M & \text{if } \omega\in[-\frac{S_l}{\alpha_M}, 0]
    \end{cases},\\ 
    &f_r(\omega)=
    \begin{cases}
    0 & \text{if } \omega\in[-\frac{\pi}{T}, 0]\cup[\frac{S_r}{\alpha_M}, \frac{\pi}{T}]\\
    \alpha_M & \text{if } \omega\in[0, \frac{S_r}{\alpha_M}]
    \end{cases},
\end{aligned}
\end{equation}
so that $f_s(\omega)\coloneqq f_r(\omega)+f_l(\omega)$ has an identical area as the tooth shape $f(\omega)$. $|F_{-1}|$ can be easily upper bounded by separating the $\omega\geq 0$ and $\omega\leq 0$ parts, respectively:
\begin{align}
    |F_{-1}| =& \frac{T}{2\pi}\left\vert\int_{-\pi/T}^{\pi/T}f(\omega)\cos(\omega T)d\omega\right\vert \nonumber\\
    \leq& \frac{T}{2\pi}\left\vert\int_{-\pi/T}^{0}f(\omega)\cos(\omega T)d\omega\right\vert \nonumber\\
    &+ \frac{T}{2\pi}\left\vert\int_{0}^{\pi/T}f(\omega)\cos(\omega T)d\omega\right\vert.
\end{align}

Then according to the result of Proposition 1, we have that:
\begin{align}
    & \left\vert\int_{-\pi/T}^{0}f(\omega)\cos(\omega T)d\omega\right\vert \leq \int_{-\pi/T}^{0}f_l(\omega)\cos(\omega T)d\omega,\\
    & \left\vert\int_{0}^{\pi/T}f(\omega)\cos(\omega T)d\omega\right\vert \leq \int_{0}^{\pi/T}f_r(\omega)\cos(\omega T)d\omega,
\end{align}
which leads to
\begin{align}
    |F_{-1}| \leq \frac{T}{2\pi}\int_{-\pi/T}^{\pi/T}f_s(\omega)\cos(\omega T)d\omega.
\end{align}
Moreover, according to Lemma 2 we have 
\begin{equation}
\begin{aligned}
    &\frac{T}{2\pi}\int_{-\pi/T}^{\pi/T}f_s(\omega)\cos(\omega T)d\omega\\
    \leq& \frac{T}{2\pi}\int_{-\pi/T}^{\pi/T}f_{\Gamma=S/(2\alpha_M),c=0}(\omega)\cos(\omega T)d\omega    
\end{aligned}
\end{equation}
where the upper bound is achieved by a square tooth. Therefore, there does not exist a tooth shape that can achieve a higher retrieval efficiency than the square tooth with the fixed area.
\end{proof}

%---------------------------------------------------------------------------------
\section{Properties of the square-tooth AFC retrieval efficiency}\label{sec:afc_eff}
%---------------------------------------------------------------------------------
We have shown that the inclusion of finite optical linewidth only scales the efficiency by a multiplicative constant determined by the intrinsic line shape. Therefore, without loss of generality, we focus on the ideal square-tooth AFC. For completeness and ease of reference, in this appendix we present some properties of the square-tooth AFC retrieval efficiency as shown in~\cite{bonarota2010efficiency}.

For the ideal square-tooth AFC with a tooth half width $\Gamma$ and height constraint $\alpha_M$ (i.e. optical depth $\mathrm{OD}=\alpha_ML$) the efficiency can be derived as:
\begin{equation}
    \eta_S(\Gamma, \mathrm{OD}) = \frac{1}{\pi^2}\mathrm{OD}^2\sin^2(\Gamma T)e^{-\frac{\Gamma T}{\pi}\mathrm{OD}},
\end{equation}
which gives the optimal tooth half width for the square tooth:
\begin{equation}\label{eqn:opt_square}
    \Gamma_\mathrm{opt}^S(\mathrm{OD}) = \frac{1}{T}\arctan\left({\frac{2\pi}{\mathrm{OD}}}\right).
\end{equation}

We see that the ratio between $\Gamma_\mathrm{opt}^\mathrm{S}$ and a quarter of the comb period $\pi/2T$ satisfies:
\begin{equation}
    0 \leq \frac{2}{\pi}\arctan\left({\frac{2\pi}{\mathrm{OD}}}\right) \leq 1,
\end{equation}
for any positive optical depth $\mathrm{OD}$. That is, the optimal width of the square tooth will never extend beyond $\pi/2T$, which is intuitive both physically and mathematically. Physically, if the comb is too wide the frequency difference between the emitters within one comb is large which will lead to dephasing, an undesirable outcome that will harm the transition dipole rephasing. Mathematically, $\cos x$ is above zero on $[0,\pi/2T]$ while below zero on $[\pi/2T,\pi/T]$, therefore if $f(\omega)\geq 0$ on $[\pi/2T,\pi/T]$ the shape will actually result in a lower value of the first integral in the definition of $F_{-1}$ (and for the square tooth which is symmetric with respect to $\omega=0$ the second integral in $F_{-1}$'s definition is simply zero). 

The retrieval efficiency of the optimized square-tooth AFC as function of the optical depth is:
\begin{equation}
    \eta_{S,\mathrm{opt}}(\mathrm{OD}) = \frac{4e^{-\frac{\mathrm{OD}}{\pi}\arctan\frac{2\pi}{\mathrm{OD}}}}{1 + \frac{4\pi^2}{\mathrm{OD}^2}}.
\end{equation}
It can be easily proved that the above function increases monotonically as the optical depth increases. Then by taking the limit of the retrieval efficiency as the optical depth approaches infinity, we can find its maximal value:
\begin{equation}
    \eta_\mathrm{max} = \lim_{\mathrm{OD}\rightarrow+\infty}\eta_{S,\mathrm{opt}}(\mathrm{OD}) = \frac{4}{e^2} \approx 54.1\%,
\end{equation}
which agrees with the upper bound derived in~\cite{afzelius2009multimode}.

%---------------------------------------------------------------------------------
\section{Proof review and generalization}\label{sec:generalization}
%---------------------------------------------------------------------------------
The optimization of the AFC tooth shape is a functional optimization problem. The retrieval efficiency is a functional of the tooth shape function, which in general requires calculus of variations or numerical techniques. Our approach only utilizes elementary analytical techniques, and only basic properties of the objective functional are relevant. Therefore, the proof procedures can be easily generalized. Here we review the proof from a high-level perspective, and discuss the form of the functionals to which the techniques can be readily applied.

First, recall that the argument of the objective functional (the AFC retrieval efficiency in the above example) is required to be bounded and non-negative. Second, the functional has a structure which can be decomposed as a product of two sub-functionals: (i)~the first sub-functional is a monotonically increasing function of the inner product of the argument function $f(x)$ (the AFC shape function in the above example), and another basis function $g(x)$ (sinusoidal function for Fourier series in the above example), $\int_a^bf(x)g(x)dx$, and (ii)~ the second sub-functional is an arbitrary function of $\int_a^bf(x)dx$. In fact, it is the product structure that allows us to simplify the proof by focusing on the maximization of $\int_a^bf(x)g(x)dx$. For the proof that the square function maximizes $\int_a^bf(x)g(x)dx$ under the assumed constraints of positivity and boundedness, we have only utilized the boundedness and monotonicity of the basis function $g(x)$.

According to the above review, we arrive at the generalized statement about the optimality of the square-function as follows:
\begin{theorem}
    Consider real-valued functionals in the following form:
    \begin{equation}
        F[f(x)] = G\left(\int_a^bf(x)g(x)dx\right)H\left(\int_a^bf(x)dx\right),
    \end{equation}
    where the real-valued function $G(x)$ is monotonically increasing and the function $H(x)$ is also real-valued; the real-valued function $g(x)$ is bounded and monotonically decreasing on $x\in[a,b]$, while the real-valued argument function satisfies $0\leq f(x)\leq \alpha$, s.t. $f(x)$ and $f(x)g(x)$ are both integrable on $x\in[a,b]$. The square function
    \begin{equation}
        f_s(x)=
        \begin{cases}
        0 & x\in[c, b]\\
        \alpha & x\in[a, c]
        \end{cases},\\
    \end{equation}
    with optimized width $(c-a)$ will achieve the maximal value of $F[f(x)]$.
\end{theorem}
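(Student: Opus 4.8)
The plan is to exploit the product structure $F[f]=G\!\left(\int_a^b f(x)g(x)\,dx\right)H\!\left(\int_a^b f(x)\,dx\right)$ in exactly the way Propositions~\ref{thm:sym_comb} and~\ref{thm:all_comb} handle the AFC efficiency, reducing a functional optimization to a one-parameter calculus problem. First I would partition the admissible functions according to their area $S\coloneqq\int_a^b f(x)\,dx$. On each such class the factor $H\!\left(\int_a^b f\right)=H(S)$ is a fixed constant, so that $F[f]=H(S)\,G\!\left(\int_a^b f g\,dx\right)$; in the physically relevant regime $H(S)\ge 0$, maximizing $F$ over the class is therefore equivalent to maximizing $G\!\left(\int_a^b f g\,dx\right)$. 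Since $G$ is monotonically increasing, this reduces further to maximizing the \emph{linear} functional $\int_a^b f(x)g(x)\,dx$ subject to $0\le f\le\alpha$ and fixed area $S$.

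The heart of the argument is the claim that, among all admissible $f$ of area $S$, the left-justified square $f_s$ (value $\alpha$ on $[a,c]$ with $\alpha(c-a)=S$, zero on $[c,b]$) maximizes $\int_a^b f g\,dx$. I would prove this with the difference trick of Proposition~\ref{thm:sym_comb}. Setting $\tilde f\coloneqq f-f_s$, equality of areas gives $\int_a^b \tilde f\,dx=0$, and the extremal profile forces the sign pattern $\tilde f(x)\le 0$ on $[a,c]$ (because $f\le\alpha=f_s$ there) and $\tilde f(x)\ge 0$ on $[c,b]$ (because $f\ge 0=f_s$ there). Combined with the monotonic decrease of $g$, which supplies $g(x)\ge g(c)$ on $[a,c]$ and $g(x)\le g(c)$ on $[c,b]$, this yields the single pointwise inequality
\begin{equation}
\tilde f(x)\,\big(g(x)-g(c)\big)\le 0\qquad\text{for all }x\in[a,b].
\end{equation}
Integrating and using $\int_a^b\tilde f\,dx=0$ gives $\int_a^b \tilde f g\,dx\le g(c)\int_a^b\tilde f\,dx=0$, i.e.\ $\int_a^b f g\,dx\le\int_a^b f_s g\,dx$, as required.

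It remains to optimize over the width. Having shown that for each area the optimum is a square, the global supremum is attained within the one-parameter family $f_s=f_{(c)}$, so substituting it collapses $F$ into the single-variable function $c\mapsto G\!\left(\alpha\int_a^c g(x)\,dx\right)H\!\left(\alpha(c-a)\right)$, whose maximizer is the optimized width $(c-a)$. This mirrors the explicit computation behind Lemma~\ref{thm:center_square} and Eqn.~\ref{eqn:opt_square}.

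The main obstacle is conceptual rather than computational: recognizing that global monotonicity of $g$ on all of $[a,b]$ collapses the estimate to the single uniform inequality above, so that the sign of the reference value $g(c)$ plays no role. This is precisely what eliminates the two-scenario case analysis ($\Gamma\le\pi/2T$ versus $\Gamma>\pi/2T$) of Proposition~\ref{thm:sym_comb}, which was forced only because $\cos(Tx)$ changes sign on $[0,\pi/T]$. A secondary delicate point is the role of $H$: the reduction to maximizing $G$ is clean only because $H$ acts as a constant multiplier once the area is fixed, and it selects the \emph{left}-justified square precisely when that multiplier is nonnegative (as for the physical factor $e^{-\alpha_0 L}$); a negative multiplier would instead select the right-justified square, so the form stated in the theorem corresponds to the $H\ge 0$ case.
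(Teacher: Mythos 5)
Your proof is correct, and its skeleton is the same one the paper sketches in Section~\ref{sec:generalization}: fix the area $S$ so that $H$ becomes a constant multiplier, use monotonicity of $G$ to reduce to maximizing the linear functional $\int_a^b f(x)g(x)\,dx$, and run the difference trick with $\tilde f = f - f_s$. Two of your refinements, however, go genuinely beyond the paper's treatment. First, the single pointwise inequality $\tilde f(x)\bigl(g(x)-g(c)\bigr)\le 0$ is the right formulation: because $g$ is monotone on all of $[a,b]$, the sign of the reference value $g(c)$ never enters, and this one estimate subsumes the two-scenario case analysis of Proposition~\ref{thm:sym_comb} --- indeed, scenario (2) there is already covered verbatim by the scenario (1) computation, which never actually uses $\cos(T\Gamma)\ge 0$. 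Second, your closing caveat about the sign of $H$ is not hedging; it is a correction the theorem needs. As stated, with $H$ an arbitrary real-valued function, the claim is false: take $[a,b]=[0,1]$, $\alpha=1$, $g(x)=1-x$, $G(y)=y-1$, $H(s)=-s$. Since $H<0$ on every class of positive area, the within-class optimum is the \emph{right}-justified square (the minimizer of $\int_a^b fg\,dx$), and the right square of width $\sqrt{2/3}$ gives $F=(2/3)^{3/2}\approx 0.544$, whereas the best any left-justified square achieves is $F=1/2$ (at full width $c=b$). The paper's AFC application is safe because there $H$ is an exponential, hence positive, but the general theorem holds only under the hypothesis $H\ge 0$ that you identified; your argument is a complete proof of that corrected statement.
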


\section*{References}
\bibliographystyle{unsrt}
\bibliography{references}

\end{document}